\newcommand{\op}{{\mathcal L}}
\newcommand{\opC}{\op_C}
\newcommand{\famC}{\mbox{\sf C}}
\newcommand{\llll}{{\ell}}
\newcommand{\somEE}{    E    }
\newcommand{\EEpm}[1]{\somEE_{\{\! #1 \!\}}\!}
\newcommand{\dEEpm}[1]{\somEE'_{\{\! #1 \!\}}\!}
\newtheorem{theorem}{Theorem}
\newtheorem{corollary}[theorem]{Corollary}
\newtheorem{lemma}[theorem]{Lemma}
\newtheorem{remark}{Remark}
\newtheorem{definition}{Definition}
\newcommand{\dcHe}{{\sf DCHE}}
\newcommand{\sdcHe}{{\sf sDCHE}}
\newcommand{\analytic}{{holomorphic}}
\newcommand{\Eqs}{{Eq.s}}
\newcommand{\backSlash}{\backslash}
\renewcommand*{\backSlash}{\,\fgebackslash\,}
\newcommand{\Imi}{\mathrm{i}}
\renewcommand{\Im}{\mbox{\sf Im}\,}
\renewcommand{\Re}{\mbox{\sf Re}\,}
\newcommand{\eiphi}{{\Phi}}
\newcommand{\eP}{{\Psi}}
\newcommand{\du}[1]{\tilde{#1}}
\newcommand{\deLta}{\Lambda}
\newcommand{\GamMa}{\Upsilon}
\newcommand{\puncturedS}{^\backprime{\!}S^1}
\newcommand{\diag}[1]{\,{\mathrm{diag}}(#1)}
\newcommand{\coveredone}{1}
\newcommand{\coveredmcone}{\mbox{\hspace{-0.1em}%
\raisebox{0.24em}{%
$\hspace{0.1em}
{
\hspace{-0.6em}
    {\scalebox{0.8}[0.6]{\rotatebox{+5}{$\curvearrowleft$}}}
\atop
\raisebox{0.1em}[1.pt][1.0pt]{\scalebox{1.0}{-1} }
}
  $
}\hspace{-1.em}
}}
\newcommand{\coveredmpone}{\mbox{\hspace{-0.4em}%
\raisebox{0.45em}{
$
{
\mbox{\scalebox{0.8}[0.6]{\rotatebox{180}{$\curvearrowright$}}}
\atop
 \scalebox{1.0}{-1}
}
  $%
}\hspace{-0.5em}
}}
\begin{document}

\title{
{
Interrelation of the Equation of RSJ Model of Josephson Junction and the Special Double Confluent Heun Equation}}
\author{ 
S.I. Tertychniy\\
~
\\ \small
VNIIFTRI, Russia
                                 }
\date{}

\maketitle

\pagestyle{plain}

\setcounter{page}{1}

\begin{abstract}
An explicit representation of
the maps
interconnecting
the sets of solutions
to the special double confluent Heun equation
and the equation of the RSJ model of overdamped Josephson
junction in case of shifted sinusoidal bias
is given.
The approach
leans on
specific properties
of
eigenfunctions of
a remarkable
linear operator acting on functions holomorphic on
the universal cover of the punctured complex plane.
The
functional equation the eigenfunctions noted obey is derived.
The matrix form of the monodromy transformation
they manifest
 is given.
\end{abstract}

%
%
%


\section*{\protect\centering{\sc Introduction }}\label{s:010}
In a sense,
the non-linear first-order ordinary differential equation
\begin{equation}
   \label{eq:010}
\dot\varphi(t) +\sin\varphi(t)=B+A\cos\omega t,
\end{equation}
where the symbols
$A,B,\omega$ denote real constants,
stands out in
the dispersed totality
of
particular instances of
differential
equations
due to its emerging in
a number of problems of
physics,
mechanics, dynamical systems theory,
geometry \cite{Foo,FLT,GI}.
Perhaps most frequently this equation and its generalizations
appear in
investigations concerning
with
theoretical
study of dynamics of Josephson junctions  \cite{Ba,MK}.
Eq.~\eqref{eq:010} seems to be the most simple equation
(or, at least, should be considered
among the most simple ones)
which is able to properly
embody the so called phase lock effect
utilized in
many
devices
built upon
capabilities of
the Josephson effect.
The latter was theoretically predicted in 1962 and
was recognized in an experiment reported in 1963.
Thereafter, in 1968,
a heuristic model
of behavior
of a Josephson junction
incorporated in a circuit with given properties was proposed
\cite{St,McC}
which  
is currently referred to as RSJ
(or sometimes as RCSJ) model.
Eq.~\eqref{eq:010} follows
from it in   the limiting case of a small effective junction capacitance
under conditions when
its
effect
is negligible.
Besides, the right-hand side of \eqref{eq:010}
corresponds to excitation  (``bias'')
of a Josephson junction
by a
controllable DC (described by the dimensionless parameter $B$)
combined with an also controllable sinusoidal AC
of the dimensionless frequency $\omega$,
of the fixed (zero) initial phase,
 and of the given amplitude
(characterized by the dimensionless parameter $A$).

Eq.~\eqref{eq:010} suits well for
a fast and accurate numerical integration
and is thus convenient for application in numerical simulations.
At the same time, perhaps somewhat surprisingly,
``the pure mathematics'' associated
with it proves
to be fairly profound and is definitely of considerable interest.
Several approaches can be here employed while the most efficient one
starts with an appropriate complexifcation of the equation in question.
We consider below this step in details and establish equivalence
(mentioned for the first time in Ref.~\cite{T2})
of Eq.~\eqref{eq:010} to
a 
double confluent Heun equation.
The latter,
 in turn,
can be further explored by the methods of complex analysis
and the theory of linear differential equations in the complex domain.

\section*{\protect\centering{\sc  Transition of Eq.~\eqref{eq:010} to Complex Domain }}\label{s:020}

To begin with,
let us notice that
the right-hand side of \eqref{eq:010}
does not depend on $\varphi$ and
is periodic in the free real variable $t$.
We embody
this periodicity in the circular motion
coupled
to the varying real $t$
in the complex plane $\mathbb C$ around zero.
In other words, denoting a generic point in
$\mathbb C$ as $z$
we associate
$
z=e^{\Imi\omega t}
$
to processes described by the function $\varphi$
obeying Eq.~\eqref{eq:010}.

Similarly, the left-hand side of
the equation,
involving all the entries of $ \varphi $,
is invariant with respect to the shifts $\varphi\leftleftharpoons \varphi\pm2\pi$.
It suggest us to utilize in complex domain 
the exponent $\eiphi = e^{\Imi\varphi}$ instead of the original $\varphi$.
The next natural step is to 
consider $\eiphi$ as a holomorphic function of
$z$
and assume that the above equality takes place {\it on the unit circle},
i.e.\ when $ z=e^{\Imi\omega t}$. In other words, it is assumed that
for real $t$ it holds
$ \eiphi( e^{\Imi\omega t})= e^{\Imi\varphi(t)} $.
At the same time,
for generic $z$,
the function
$\eiphi(z)$ becomes
the analytic continuation
of its instantiation
on
the above circle,
being therefore not pointwise representable
through
values of 
$\varphi$.
The
next and
also
last action
in the constructing of the transformation
we search for
is the selecting
of a differential equation constraining holomorphic $\eiphi(z)$
in
such a way that,
when restricted to the unit circle,
it would turn into
Eq.~\eqref{eq:010}, provided the above identifications
are taken into account. Such an equation can easily be found. 
It reads
\begin{equation}%
\label{eq:020}
z^2 {\eiphi}'
=
(2\Imi \omega)^{-1}z\,(1 - {\eiphi}^2)+
\big(\llll\,z+\mu(z^2+1)\big){\eiphi}.
\end{equation}
Here $\llll=B/\omega, \mu=A/2/\omega$ are the new but cognate
constant parameters.

The  non-linear ODE   \eqref{eq:020}      belongs to the 
Riccati' family. It is well known
that all these equations
are convertible to
certain linear second-order ODEs. 
We are going to employ such an equivalence
leading, in our case, to a double confluent Heun equation.
However, we
make here use of an
 indirect method for its derivation
which is
based on
inspection of properties of a remarkable linear operator $\opC$
defined in the next section and having, at first glance, no relation
  to the equation of RSJ model.

\section*{\protect\centering{\sc  Operator $\opC$  }}\label{s:030}

Let us consider the linear
operator 
$\opC$ which sends
a
holomorphic function $E$
of the complex argument $z$
to the function $\opC[\somEE]$ of the same argument
as follows
\begin{eqnarray}
     \label{eq:030}
\opC: \somEE(z)\mapsto
\opC[\somEE\,](z)
\!&=&\!
2\,\omega \,
z^{-\llll-1}
\raisebox{-0.8ex}[0.8ex][0.9ex]{\bigg\lfloor}%
_{\mbox{
      \rlap{\hspace{-1.2ex}
$\genfrac{}{}{0pt}{1}{
                     \phantom{.|}
                     }{
                      z\leftleftharpoons{}z^{-1}
                       }
$
           }
        }
  }
\hspace{-1.3ex}
\big( \somEE'(z) - \mu \somEE(z)\big).
\end{eqnarray}
Here
the symbols $\llll,\mu, \omega$
denote the
 constant parameters
which are, for now,
considered
arbitrary except
for the
claim of fulfillment of the
non-degeneracy conditions
$\mu\not=0\not=\omega$.
The mark $\lfloor_{z\leftleftharpoons{}z^{-1} } $
indicates the operation of
replacement of the variable in the expression
situated to the right of it.
The function
$\opC[\somEE](z)$
is obviously holomorphic
in the correspondingly transformed domain, provided the latter does not contain zero.

It is natural to adopt
as the domain  $\mathbf{\Omega}$ of
the operator
$\opC$
some set (say, a linear space)
of functions
which is preserved under %
its action.
This assumption
requires
 of
the domain $\Theta$
of the members
of $\mathbf{\Omega}$  
 to be invariant with
respect to the map $\famC: z\mapsto z^{-1}$ or,
at least, to produce
a nonempty intersection
 $\famC\,\Theta \cap  \Theta\not=\emptyset$.
The punctured complex plane
$\mathbb{C}^*= \mathbb{C}\backSlash 0$ is an example of such a domain.
For the sake of definiteness,
we shall utilize it, provisionally,
in the role of $\Theta$.
This
turns out to be
 not the best solution but
later on
we shall become able to specify
 a more appropriate $\Theta$ realization.

The following statement holds true.
\begin{lemma}\label{l:010}
The squared (composed with itself) transformation $\opC$
preserves its argument, i.e.\
\begin{equation}
\opC\circ\opC[\somEE]=\somEE,
\end{equation}
if and only if
the function $\somEE=\somEE(z)$
obeys the equation
\begin{eqnarray}
\label{eq:050}
&&z^2 \somEE''
+\big((\llll+1)
 z
+ \mu (1-z^2) \big) \somEE'
+
(  -\mu  (\llll+1) z
+
\lambda
%
)
\somEE
=0,
\\
\label{eq:060}
&&\mbox{where } \lambda=(2\omega)^{-2}-\mu^2.
\end{eqnarray}
\end{lemma}\noindent
\begin{proof}
The
above assertion
immediately follows from
the identity
\begin{eqnarray}
  \label{eq:070}
  \opC\circ\opC[\somEE](z)
  &\equiv&
\somEE(z)
  -
  (2\omega)^2\cdot\mathrm{lhs}\mbox{\eqref{eq:050}},
\end{eqnarray}
where
`$ \mathrm{lhs}\mbox{\eqref{eq:050}} $'
stands for the left-hand side expression of
Eq.~\eqref{eq:050}
considered as a function of $z$.
In particular, the equality  \eqref{eq:060}  also
follows from
 a straightforward computation
 verifying
Eq.~\eqref{eq:070} by means of expansion of its left-hand side.
\end{proof}

It has to be noted that
the ordinary second order
linear homogeneous differential equation \eqref{eq:050}
belongs to the family of
so called
double confluent Heun equations (often referred to as 
\dcHe{} or similarly).
They are
discussed
in  Ref.s~\cite{SW,SL};
see also the online resource 
\cite{HeunProject}
and Ref.~\cite{H}
 for
more
recent
bibliography.
A generic \dcHe{}
is identified
by
\textit{four\/}
constant parameters
while Eq.{~}\eqref{eq:050} involves
only  \textit{three} ones.
Accordingly,
it was
suggested to
name
 Eq.{~}\eqref{eq:050} the
\textit{special\/}
double confluent Heun equation (which may be referred as \sdcHe, accordingly)
and the term is
here
adopted
for definiteness
as well.

The clarification
of relationship
between
the equations \eqref{eq:050} and 
                                 \eqref{eq:010}
can be built upon the study of
the eigenfunctions of the
operator \eqref{eq:030}. The principal point
is here that any such eigenfunction
is automatically
a solution to Eq.{~}\eqref{eq:050} (for the appropriate value of the
parameter $\lambda$).
Indeed, the following  statements holds true:
\begin{lemma}\label{l:020}
An eigenfunction of the operator
$\opC$
with eigenvalue
$\nu\not=0$
obeys 
Eq.{~}\eqref{eq:050}
with
$\lambda=\nu^2/(2\omega)^2-\mu^2$.
If the parameter link
\eqref{eq:060}
is met 
then $\nu^2=1$.
\end{lemma}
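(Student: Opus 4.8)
The plan is to piggy-back on the algebraic identity \eqref{eq:040} already invoked for Lemma~\ref{l:010}, supplemented only by the linearity of $\opC$, so that the statement follows with essentially no fresh computation. The one point that must be kept firmly in mind is that the quantity $\lambda$ occurring inside $\mathrm{lhs}\mbox{\eqref{eq:020}}$ in \eqref{eq:040} is not free there: since the composite $\opC\circ\opC$ depends solely on the three parameters $\llll,\mu,\omega$ carried by $\opC$, that $\lambda$ is pinned to the fixed value $\lambda_*=(2\omega)^{-2}-\mu^2$ singled out by \eqref{eq:030}.

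First I would use the eigenfunction relation $\opC[\somEE]=\nu\,\somEE$ and apply $\opC$ once more; linearity gives $\opC\circ\opC[\somEE]=\nu^2\,\somEE$. Feeding this into \eqref{eq:040} collapses the identity to
\begin{equation*}
(2\omega)^2\,\mathrm{lhs}\mbox{\eqref{eq:020}}\big|_{\lambda=\lambda_*}=(1-\nu^2)\,\somEE .
\end{equation*}
The next step is the observation that $\lambda$ enters \eqref{eq:020} only as the constant summand in the coefficient of $\somEE$, so changing it from $\lambda_*$ to any other value $\lambda$ simply adds $(\lambda-\lambda_*)\,\somEE$ to the left-hand side. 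Choosing $\lambda$ to cancel the right-hand member above, i.e. $\lambda=\lambda_*-(1-\nu^2)/(2\omega)^2$, annihilates the whole expression, meaning that $\somEE$ solves \eqref{eq:020} with precisely this $\lambda$; a one-line simplification using $\lambda_*=(2\omega)^{-2}-\mu^2$ rewrites it as $\lambda=\nu^2/(2\omega)^2-\mu^2$, the value asserted.

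For the closing claim I would simply impose \eqref{eq:030} on the $\lambda$ just found, that is, demand $\nu^2/(2\omega)^2-\mu^2=(2\omega)^{-2}-\mu^2$; cancelling $\mu^2$ and the nonzero factor $(2\omega)^{-2}$ yields $\nu^2=1$. Equivalently, under \eqref{eq:030} Lemma~\ref{l:010} makes $\opC\circ\opC$ act as the identity on solutions of \eqref{eq:020}, whereas the eigenfunction gives $\opC\circ\opC[\somEE]=\nu^2\,\somEE$, so $\somEE\not\equiv0$ again forces $\nu^2=1$. I do not expect a real obstacle: the only delicate point is bookkeeping the hidden value $\lambda_*$ buried in \eqref{eq:040} and not conflating it with the floating $\lambda$ of the differential equation that the eigenfunction is being shown to satisfy.
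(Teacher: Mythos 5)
Your proposal is correct and follows exactly the route of the paper, whose entire proof is the single remark that the assertion ``follows in obvious way from the same identity \eqref{eq:040}''; your argument is simply the careful elaboration of that remark. In particular, your bookkeeping point---that the $\lambda$ hidden inside $\mathrm{lhs}\mbox{\eqref{eq:020}}$ in \eqref{eq:040} is pinned to $(2\omega)^{-2}-\mu^2$, while the $\lambda$ the eigenfunction ends up satisfying \eqref{eq:020} with is shifted by $(\nu^2-1)/(2\omega)^2$---is precisely the detail the paper leaves implicit, and you handle it correctly.
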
\noindent
\begin{proof}
The lemma assertion
 follows in obvious way from the same identity
\eqref{eq:070}.
\end{proof}
\begin{remark}\label{r:010}
\hangindent=2ex
\rm
Omitting above the case $\nu=0$,
we are not at risk to
forfeit
any non-trivial relationship
since, obviously,
the only
eigenfunction of $\opC$
with
null eigenvalue
is the exponent
$E(z)=\exp(\mu z)$
which can at any moment be taken into account, if necessary.
\end{remark}

Motivated by
the two above lemmas, we may
adopt
the set of solutions to
Eq.{~}\eqref{eq:050}, which constitutes
a 2-dimensional
linear space,
as
the functional space $\mathbf\Omega$
on which the action of the operator $\opC$
has to  
be considered.

\begin{remark}\label{r:020}
\hangindent=2ex
\rm
In  canonical representation
(i.e.\ when resolved with respect to the higher derivative)
the linear differential equation \eqref{eq:050}
suffers of the only singularity
situated
at $z=0$. Its solutions are thus holomorphic
everywhere except at zero.
The singularity
of the equation
at the center of $\mathbb C$
is irregular.
However 
solutions
holomorphic thereat may, in principle, exist.
This can occur
only on some special subset of ``tuned'' constant parameters
of lower dimension, see Ref.s~\cite{T3,BT3,BT2}.
Moreover, even on it,
only a single (unique up to a constant factor) solution
is regular at zero whereas all other ones are not.
It means that,
when considering the
common domain
$\Theta$
of functions constituting $\mathbf\Omega$,
one must remove
from it
the center $z=0$.
Then, starting from the complex plane, the
 punctured one
$\mathbb{C}^*= \mathbb{C}\backSlash 0$ arises.
It is not simply connected and as a consequence
a generic
solution $\somEE$ to Eq.{~}\eqref{eq:050},
excluding 
the mentioned exceptional cases of regularity at zero,
can not live on it.
The point is that
the analytic continuation of a generic solution 
to Eq.{~}\eqref{eq:050}
along non-homotopic
curves evading zero
may
produce different values at the point where they meet,
leading therefore to a
multi-valued function.
The non-uniqueness arises here since
the genuine
domain for solutions to Eq.{~}\eqref{eq:050}
is not a subset of $\mathbb{C}$ (such as $\mathbb{C}^*$)
but
a Riemann surface reducing here to
 {\it the universal
cover\/}
$\tilde{\mathbb{C}}\mathstrut^*$
of $\mathbb{C}^*$.
This surface 
is
diffeomorphic to
$\mathbb{C}$,
the covering projection
$\Pi:\tilde{\mathbb{C}}\mathstrut^*\simeq\mathbb{C}
\stackrel{\exp}{\mapsto}
\mathbb{C}^*$
being realized by the natural exponential function.

As a consequence, %
when lifted to $\tilde{\mathbb{C}}\mathstrut^*$,
 the map
\begin{equation}
           \label{eq:080} 
\famC: z\mapsto z^{-1}
\end{equation} 
involved in the transformation \eqref{eq:030}
in the form of replacement of the free variable
loses the uniqueness
of its ``implementation''. %
Indeed, 
 $\famC$ may now have only a single fixed point.
 Hence
one has either $\famC(1)=1 $ and $\famC(z)\not=z$ for all the other
points
$z$ of the $\somEE$ domain including the (lifts of) $-1$,
or it holds
$\famC(-1)=-1 $ and $\famC (z)\not=z$ otherwise.
There is therefore no unaffected point playing role of $-1$ in the former case
(the first $\famC$ ``implementation'' ) and similarly for $+1$
in the latter case (for the alternative ``implementation''  of the map $\famC$).

These subtleties go beyond the scope of
the present notes,
however. In order to focus on
 the principal points
of  the
relationship in question,
we restrict our consideration
to \textit{a subset\/} (subdomain) of the
genuine
domain of functions verifying Eq.{~}\eqref{eq:050}.
Namely, we consider it to be
the open set
obtained
from  $\mathbb{C}^*$
by a removal of the ray of negative reals,
$\Theta^*
=
\mathbb{C}^*\backSlash\mathbb{R}_{<0}$.
The resulting subdomain
 is simply connected
and any function holomorphic in it, including solutions to Eq.{~}\eqref{eq:050},
is single-valued.
Besides,
the behavior
of the transformation
 $\famC$
remains (locally) ``standard'' %
and claims no precautions,
all  this
 at a price
of the dropping out from consideration
the  value
$-1\not\in\Theta^*$
of the
argument $z$
as well as all the other negative real numbers. %
\end{remark}

Let us
consider now  the
properties of the operator
$\opC$
in more details
and show how they
enables one to
establish the explicit form of
the
relationship of
the equations \eqref{eq:050} and \eqref{eq:020}
of which the latter is
directly related, in turn, to Eq.~\eqref{eq:010}.

\section*{\protect\centering{\sc  Eigenfunctions of the Operator $\opC$
      and Their Properties  }}\label{s:040}

Let 
the equation~\eqref{eq:050} with fixed parameters $\llll, \lambda, \mu $
such that $\lambda+\mu^2\not=0$ be given.
Then one can resolve Eq.{~}\eqref{eq:060}
with respect to $\omega$
(the scaling parameter in \eqref{eq:030}), i.e.\ select it obeying the equation
\begin{equation*}
4\omega^2(\lambda+\mu^2)=1.
\end{equation*}
Given such $\omega$,
we define the %
operator $\opC$ by the formula \eqref{eq:030}
treated ``as it stands''
in vicinity of $z=1$
and  assume that it
acts on the linear space
$\mathbf\Omega$ of solutions to Eq.{~}\eqref{eq:050}.
In view of the
lemma \ref{l:020},
an eigenfunction of the operator $\opC$
belonging to $\mathbf\Omega$
may only
correspond to either the eigenvalue $+1$ or to the eigenvalue
$-1$.
We denote such eigenfunctions (if they exist) by the
the symbols $\EEpm{+}$ and $\EEpm{-}$, respectively.

The following simple but important statements hold true.
\begin{lemma}\label{l:030}
\hspace{1ex}
\\[-1em]
\begin{itemize} 
\item
If a solution $\somEE=\somEE(z)$ to Eq.{~}\eqref{eq:050}
is an eigenfunction of the operator $\opC$
then
it solves the Cauchy problem for this equation
posed at $z=1$
with the initial data obeying one of the two constraints
\begin{equation}
\label{eq:090}
\somEE'(1)=
(\pm(2\omega)^{-1}+\mu) \somEE(1) .
\end{equation}
These correspond to the eigenvalues $\pm1$, respectively.
\item
The eigenfunctions $\EEpm{\pm} $, if exist,
obey the functional equation
\begin{equation}
  \label{eq:100}
\hspace{-1.1em}
\EEpm{+} (z)\EEpm{-}(1/z)+\EEpm{-} (z)\EEpm{+} (1/z)
=
2\,e^{\mu(z+1/z-2)}\EEpm{+} (1)\EEpm{-}(1).
\end{equation}
\end{itemize}%
\end{lemma}
\begin{corollary} \label{c:010}
\hspace{1ex}
\\[-1ex]
\begin{itemize}
\item $\EEpm{\pm}(1)\not=0$ for any eigenfunction
of the operator $\opC$.
\item There may exist not more than two, up to constant factors, eigenfunctions
of the operator $\opC$; their eigenvalues are distinct and amount to $\pm1$.
\end{itemize}%
\end{corollary}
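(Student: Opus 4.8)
The plan is to reduce both assertions to elementary facts about the two-dimensional solution space of the linear equation \eqref{eq:020}, using the initial-data characterization of eigenfunctions furnished by Lemma \ref{l:030}. For the first item I would argue by contradiction. Suppose an eigenfunction $\somEE$ of $\opC$, with eigenvalue $\pm1$, satisfied $\somEE(1)=0$. The constraint \eqref{eq:070} would then force $\somEE'(1)=(\pm(2\omega)^{-1}+\mu)\cdot\somEE(1)=0$ as well, so both the value and the first derivative of $\somEE$ would vanish at $z=1$. By Remark \ref{r:020} the only singular point of \eqref{eq:020} is $z=0$, so $z=1\in\Omega^*$ is an ordinary point, and the uniqueness theorem for the Cauchy problem of a second-order linear homogeneous ODE at an ordinary point yields $\somEE\equiv0$. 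This contradicts the nontriviality of an eigenfunction, establishing $\EEpm{\pm}(1)\not=0$.

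For the second item I would exploit that the solution set of \eqref{eq:020} is a two-dimensional linear space, in which a solution is determined up to its Cauchy data by the pair $(\somEE(1),\somEE'(1))$ at the ordinary point $z=1$. By Lemma \ref{l:030}, any eigenfunction with eigenvalue $+1$ must meet the single homogeneous linear condition $\somEE'(1)=((2\omega)^{-1}+\mu)\,\somEE(1)$, and any eigenfunction with eigenvalue $-1$ the condition $\somEE'(1)=(-(2\omega)^{-1}+\mu)\,\somEE(1)$. Each such condition carves a one-dimensional subspace out of the two-dimensional space of admissible Cauchy data, hence furnishes at most a one-dimensional eigenspace for each of the two eigenvalues. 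Therefore there are at most two eigenfunctions up to constant factors.

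Finally, linear independence is immediate from the general fact that eigenvectors of a linear operator belonging to distinct eigenvalues are linearly independent; here the eigenvalues $+1$ and $-1$ are distinct. Alternatively one may check it directly: normalizing $\EEpm{+}(1)=\EEpm{-}(1)=1$ (legitimate by the first item), the Wronskian at $z=1$ equals $\EEpm{-}'(1)-\EEpm{+}'(1)=-2(2\omega)^{-1}=-\omega^{-1}\not=0$ since $\omega\not=0$. I expect no serious obstacle in this argument; the only step demanding care is the verification that $z=1$ is genuinely an ordinary point, so that the uniqueness theorem is applicable --- this is exactly what Remark \ref{r:020} together with $1\in\Omega^*$ guarantees, and it is also what makes the normalization $\somEE(1)=1$ legitimate via the first item.
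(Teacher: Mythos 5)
Your proof is correct and follows essentially the paper's intended route: the paper states Corollary \ref{c:010} as an immediate consequence of Lemma \ref{l:030} (giving no separate proof), and your argument---deducing $\EEpm{\pm}(1)\not=0$ from the constraint \eqref{eq:070} together with uniqueness for the Cauchy problem at the ordinary point $z=1$, and bounding each eigenspace by the one-dimensionality of the admissible initial data for each eigenvalue---is exactly the natural filling-in of that deduction. Your supplementary Wronskian verification is also consistent with the paper, being precisely Remark \ref{r:025} evaluated at $z=1$.
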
\noindent
Accordingly, the two
eigenfunctions $\EEpm{\pm}$
are linearly independent and hence
provide
the basis of the linear space $\mathbf\Omega$ of solutions to Eq.{~}\eqref{eq:050}.
\begin{proof}[Lemma proof]
In accordance with lemma \ref{l:020}, each eigenfunction
of the operator $\opC$ verifies Eq.{~}\eqref{eq:050}.
Next, by definition,
the property
of being 
an eigenfunction of $\opC$
with the eigenvalue either $+1$ or $-1$
is equivalent to
the equalities
\begin{equation}
     \label{eq:110}
\dEEpm{\pm}(z)=\pm(2\omega)^{-1} z^{-\llll-1}   \EEpm{\pm}(1/z)
+\mu \EEpm{\pm}(z),
\end{equation}
respectively.
Evaluating them at $z=1$, one obtains \Eqs{~}\eqref{eq:090}.

Further, considering Eq.{~}\eqref{eq:100},
let us denote as $U=U(z)$
the difference of its left-  and right-hand sides.
Computing its derivative and eliminating
 the
derivatives $ \dEEpm{\pm} $ by means of the equations
\eqref{eq:110}, the equation  $U'=\mu\cdot(z-1/z)\cdot U$ arises.
Since $U(1)=0$, obviously, this linear homogeneous
first order ODE
forces $U$ to coincide with
its trivial null
solution implying $U(z)\equiv0$.
The lemma is proven.
\end{proof}
\begin{remark}\label{r:030}
Yet another quite predictable
constraint which the eigenfunctions $\EEpm{\pm}$ obey reads
\begin{equation*}
 \dEEpm{+}(z)\EEpm{-}(z)- \EEpm{+}(z)\dEEpm{-}(z)=
\omega^{-1}z^{-\llll-1}
e^{\mu(z+1/z-2)}\EEpm{+} (1)\EEpm{-}(1)
\end{equation*}
It follows from consideration of the Wronskian for
Eq.{~}\eqref{eq:050} which applies since
the eigenfunctions
$ \EEpm{\pm} $ verify the latter.
\end{remark}

\section*{\protect\centering{\sc 
Explicit Representations of Eigenfunctions of $\opC$
}}\label{s:050}

As it has been mentioned,
the
eigenfunctions
of the operator $\opC$ can be
utilized for description of the space of solutions to
Eq.{~}\eqref{eq:050}.
However,  we should show,
 at first,
that they do exist.
We remind that the
defining property of
the functions $\EEpm{\pm}$
is equivalent to the claim of fulfillment
of one of the equations \eqref{eq:110}.
The latter are however
not the ``classical'' ODEs
since an unknown function
involved therein
is invoked with
two distinct arguments.
Hence
the
corresponding
standard theorem
of existence of solutions of ODE is here not directly applicable
and
a separate proof of existence of eigenfunctions of the operator
$\opC$ has  to be given.
To that end, let us consider the following
\begin{lemma}\label{l:040}
Let the \analytic{} function $\eiphi=\eiphi(z)$
defined in a simply connected vicinity of the point $z=1$
obey the 
Riccati equation (cf. Eq.~\eqref{eq:020})
\begin{eqnarray}
          \label{eq:120}
&&
z {\eiphi}'
+(2\Imi \omega)^{-1}({\eiphi}^2-1)=
(\llll+\mu(z+z^{-1})){\eiphi},
\end{eqnarray}
and the \analytic{} function $\eP=\eP(z)$
obey the (subsidiary) decoupled linear homogeneous first order ODE
\begin{equation}
\label{eq:130}
2\Imi\omega z\eP'=( {\eiphi} + {\eiphi}^{-1})\eP.
\end{equation}
Let also
\begin{equation}
  \label{eq:140}
|\eiphi(1)|=1\mbox{ and }\eP(1)=1.
\end{equation}
Then the expressions
\begin{equation}
       \label{eq:150}
\begin{aligned}
\EEpm{\pm}(z)&
=
2^{-1}
e^{\mu(z+1/z-2)/2}
z^{-\llll/2}
\times\\&\hspace{2ex}
\left\lgroup
\frac{1\pm\Imi}{\sqrt2}
(\eP(z)\eiphi(z))^{1/2}
+
\frac{1\mp\Imi}{\sqrt2}
(\eP(1/z)/\eiphi(1/z))^{1/2}
\right\rgroup
\end{aligned}
\end{equation}
determine 
the two eigenfunctions of the operator $\opC$ with
eigenvalues $\pm1$, respectively, provided neither of
them is the identically zero function.
In the exceptional case pointed out,
another function from the pair \eqref{eq:150}
is still a proper (non-trivial) eigenfunction of $\opC$.
\end{lemma}
\begin{proof}[Proof outline]
To verify the asserted
property of a function $ \EEpm{\pm}(z)$
(where one among the two sign symbols has been chosen and fixed),
 one has to compute
its derivative and
to
examine %
the fulfillment of the corresponding equation among \Eqs{~}\eqref{eq:110}.
In our case,
utilizing
the equations \eqref{eq:120} and \eqref{eq:130},
the aforementioned derivative
is expressed
in
terms of products of the same functions $\eiphi$ and $\Psi$
with the same arguments $z$ and $1/z$
which are involved in the  definition \eqref{eq:150}.
Subsequent algebraic
simplification establishes
the identical vanishing of the coefficients in front
of all the remaining products of $\eP$ and $ \eiphi $.
\end{proof}

The existence of the functions $\eiphi$ and $\Psi$ in vicinity of the point
$z=1$ is ensured by the wellknown theorem of existence of local solution
of Cauchy problem for ordinary differential equations.
In case of \Eqs{~}\eqref{eq:120} and \eqref{eq:130},
one can  state 
even more according to the following
\begin{lemma}\label{l:050}
Let the parameters $\llll, \mu,  \omega $ be real and $\omega>0$.
In case of initial
conditions obeying the constraints
\eqref{eq:140},
the solution
$\eiphi(z)$,  $\Psi(z)$
of the Cauchy problem for
the system of
equations~\eqref{eq:120} and \eqref{eq:130}
exists
in some vicinity of the ``punctured unit circle''
\begin{equation}
\label{eq:155}
\puncturedS=\{z\in\mathbb{C}, |z|=1, z\not=-1\},
\end{equation} 
both functions  $\eiphi(z)$,  $\Psi(z)$
having also no zeros therein.
\end{lemma}
\begin{proof}
Let us restrict Eq.{~}\eqref{eq:120} to the unit circle embedded into $ \mathbb{C} $
and parameterized
by means of the
substitutions 
\begin{equation}
       \label{eq:160}
z \leftleftharpoons 
e^{\Imi\omega t},\; \eiphi(z)
\leftleftharpoons 
e^{\Imi\varphi(t)},\; t\in
\Xi
=
(-\pi\omega^{-1}, \pi\omega^{-1})
\subset\mathbb{R}.
\end{equation}
%
Then
we obtain exactly Eq.{~}\eqref{eq:010} with the parameters
\begin{equation}
   \label{eq:170}
A=2\omega\mu, \; B=\omega\llll.
\end{equation}
%
Similar conversion %
of Eq.{~}\eqref{eq:130}
leads to
the equation
\begin{equation}
 \label{eq:180}
\dot P(t) =\cos \varphi(t),
\nonumber
\end{equation}
where
the function
$ P(t)$ is related to the original unknown  $\Psi(z)$ through %
the equation
$$ e^{P(t)}=\Psi(e^{\Imi\omega t} ). $$

For
any real $A, B,$  and $ \omega $,
Eq.{~}\eqref{eq:010} is 
 solvable on any segment of the real
axis
for any
 real initial data $\varphi(t_0)=\varphi_0$
set up
at
any prescribed
real
 $t_0$. Moreover, the corresponding solution is
a real-analytic function. Accordingly, let some real $\varphi_0$
be fixed and let the real-analytic function $\varphi(t)$
verify Eq.{~}\eqref{eq:010} on the segment
$\Xi 
$,
obeying
the initial condition $ \varphi(0)=\varphi_0 $.
Let us also introduce
the real-analytic function
$P(t)=\int^t_0\cos\varphi(\tilde t)\, d \,\tilde t  $
on the same domain $\Xi $.

The analytic continuation of the map
($\mathbb{C}\supset\mathbb{R}\supset)\,
\Xi 
\ni t\mapsto e^{\Imi\omega t}
\in {\puncturedS}(\subset \mathbb{C}^* )
$
establishes the \analytic{} diffeomorphism of some vicinity
of the segment $\Xi$ to
a
vicinity of ``the punctured unit circle''
$ \puncturedS $  \eqref{eq:155},
the former being in smooth bijection with the latter.
The \analytic{}  functions $\eiphi$ and $\Psi  $ arising as
the induced pullbacks of
analytic continuations of
the real analytic functions
$ e^{\Imi\varphi( t)}  $ and $e^{P(t)}  $, respectively, verify
\Eqs{~}\eqref{eq:120},   \eqref{eq:130}.
By definition,
they have no zeros on $\puncturedS $;
moreover, $|\eiphi|=1$ whereas
 $\Psi$ is real and strictly positive
therein. Hence there exist no their zeros in some vicinity of
$\puncturedS $ as well.
 Besides, in accordance
 with definitions and the posing of the Cauchy problem for the function $\varphi$,
it holds
 \begin{equation}
    \label{eq:190}
 \eiphi(1)=e^{\Imi\varphi_0},\,\Psi(1)=1
 \end{equation}
(where $\varphi_0$ can be chosen
arbitrary real).  \Eqs{~}\eqref{eq:140} are thus also fulfilled.
 The lemma is proven.
\end{proof}
          \begin{remark}\label{r:040}
\hangindent=2ex
\rm
The  non-uniqueness
of the square root function
involved
in Eq.{~}\eqref{eq:150}
is to be
 eliminated
by
means of
 the assignment
to the
functions
$\eiphi^{1/2} $, $\eiphi^{-1/2} $, and
$\Psi^{1/2} $
(the pullbacks of) the 
analytic continuations of the functions
$\exp\frac{\Imi}{2}\varphi(t) $,
$\exp\frac{-\Imi}{2}\varphi(t) $,
and
$\exp \frac{1}{2}
          \int_0^t\cos\varphi(\tilde t)d \tilde t $,
respectively.
               \end{remark}

\begin{remark}\label{r:050}
\hangindent=2ex
\rm
The requirement of the above lemma
claiming of
the constant parameters to be real
is motivated by \Eqs{~}\eqref{eq:170}, in which
the constants $A, B, \omega$ are constrained by
their meaning
inferred from
physical or geometrical
problems in which Eq.{~}\eqref{eq:010} is utilized.
Similarly,
the variable $t$ is there
interpreted as a (rescaled dimensionless) time or length.
While maintaining
contact with applications,
we assume below the above reality conditions
to be fulfilled throughout.
At the
same time, it is worth noting
that the existence results
(and most formulas evading application of complex conjugation)
remain valid, at least,
for sufficiently small variations of the parameters
shifting them from the real axis to $\mathbb C$.
\end{remark}

We see that
any
solution to Eq.{~}\eqref{eq:010} generates a pair of
eigenfunctions of the operator $\opC$ which are defined by
\Eqs{~}\eqref{eq:150} in terms of the functions $\eiphi(z)$ and $\Psi(z)$
the above lemma operates with.
However, one of them (not both, though) may
prove to be identical zero.
To clarify conditions
of appearance of
such a ``pathology'', we need the following
property of the eigenfunctions of $\opC$. 
\begin{lemma}\label{l:060}
Let us define the sequence of pairs of
 functions $\{a_k(z),b_k(z)\}, \\
k=1,2,\dots$,
\analytic{} everywhere except zero,
by means of the following recurrent scheme:
\begin{eqnarray}
  \label{eq:200}
&&\;
a_1=\mu,\;b_1=\pm(2\omega)^{-1}z^{-\llll-1};
\\
  \label{eq:210}
&&
\begin{aligned}
\
&
a_{k+1}=\mu a_{k} \mp(2\omega)^{-1}z^{\llll-1} b_k+ a_k',
\\
&
b_{k+1}=\pm (2\omega)^{-1}z^{-\llll-1} a_k - \mu z^{-2}b_k + b_k'.
\end{aligned}
\end{eqnarray}
Let also the functions $\EEpm{\pm}$ obey the equations
$\opC \EEpm{\pm}=\pm \EEpm{\pm}$. Then their derivatives
admit the following representations:
\begin{equation}
    \label{eq:220}
\frac{d^k}{d z ^k}\EEpm{\pm}(z) = a_k(z)\EEpm{\pm}(z)+b_k(z)\EEpm{\pm}(1/z),\; k=1,2,\dots
\end{equation}
In particular, it holds
\begin{equation*}
\frac{d^k}{d z ^k}\EEpm{\pm}(1) = (a_k(1)+b_k(1))\EEpm{\pm}(1),\; k=1,2,\dots
\end{equation*}
\end{lemma}
\begin{proof}
Let us apply the mathematical induction.
The induction base, the case $k=1$, reduces to
the equality which, in
view of \eqref{eq:200},  is
equivalent just
to the
corresponding
 equation $\opC \EEpm{\pm}=\pm \EEpm{\pm}$
 fulfilled by construction.
Next, let us compute the derivative of the both sides of Eq.\
\eqref{eq:220} for some fixed $k$,
eliminating afterwards $\dEEpm{\pm} $ on the right by means of
Eq.{~}\eqref{eq:220} get with $k=1$, and eliminating
the derivatives
 $a_k', b_k'$
with the help of \Eqs{~}\eqref{eq:210}.
As it can be shown by a straightforward computation,
the result reduces to
the same
equation 
\eqref{eq:220} in which
the index $k$ is replaced by $k+1$.
The induction step has thus been carried out and the lemma proof is accomplished.
\end{proof}
\begin{corollary}  \label{c:020}
The function
$ \EEpm{\pm}(z)$
defined by Eq.{~}\eqref{eq:150}
is the identically zero function
 if and only if $ \EEpm{\pm}(1)=0 $.
\end{corollary}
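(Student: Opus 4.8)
The plan is to reduce the statement to the observation, already encoded in Lemma~\ref{l:060}, that the entire Taylor expansion of $E_{\{\pm\}}$ about the point $z=1$ is governed by the single number $E_{\{\pm\}}(1)$. The ``only if'' direction is immediate: if $E_{\{\pm\}}$ is the identically zero function then of course $E_{\{\pm\}}(1)=0$. All the content lies in the ``if'' direction, which I would establish as follows.

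First I would note that, by Lemma~\ref{l:040}, the function $E_{\{\pm\}}$ defined by \eqref{eq:130} satisfies the eigenfunction relation $\opC E_{\{\pm\}}=\pm E_{\{\pm\}}$; the verification there is a direct computation that nowhere presupposes $E_{\{\pm\}}$ to be nontrivial, so it applies to our function irrespective of whether it later turns out to vanish. Hence the hypothesis of Lemma~\ref{l:060} is met, and the concluding display of that lemma reads
\[
\frac{d^k}{dz^k}E_{\{\pm\}}(1)=\big(a_k(1)+b_k(1)\big)\,E_{\{\pm\}}(1),\qquad k=1,2,\dots
\]
Under the assumption $E_{\{\pm\}}(1)=0$ the right-hand side vanishes for every $k$, so both $E_{\{\pm\}}(1)$ and all derivatives of $E_{\{\pm\}}$ at $z=1$ vanish.

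It remains to pass from the vanishing of all Taylor coefficients at $z=1$ to identical vanishing. By Lemma~\ref{l:050}, together with the branch prescription of Remark~\ref{r:050}, the function $E_{\{\pm\}}$ is holomorphic on a connected vicinity of the punctured unit circle \eqref{eq:140}, an open set containing the point $z=1$. A holomorphic function whose Taylor series at an interior point of its connected domain vanishes identically is the zero function there; the identity theorem therefore yields $E_{\{\pm\}}\equiv0$ on that vicinity, and, since $E_{\{\pm\}}$ is by Lemma~\ref{l:020} a solution of Eq.~\eqref{eq:020} and thus extends holomorphically over the connected domain $\Omega^*$ (Remark~\ref{r:020}), on the whole of $\Omega^*$ as well.

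There is essentially no obstacle here, as Lemma~\ref{l:060} has already performed the real work. The single point deserving a moment's care — and the one I would be sure to record — is that membership in the eigenspace of $\opC$ is verified for the expression \eqref{eq:130} independently of its nontriviality, so that the information $E_{\{\pm\}}(1)=0$ may legitimately be propagated to all orders and thence to the whole domain.
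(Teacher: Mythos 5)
Your proof is correct and follows exactly the route the paper intends: the corollary is stated as an immediate consequence of Lemma~\ref{l:060}, whose final display forces all derivatives of $E_{\{\pm\}}$ at $z=1$ to vanish once $E_{\{\pm\}}(1)=0$, after which holomorphy and the identity theorem on the connected domain give identical vanishing. Your extra care in noting that the eigenrelation $\opC E_{\{\pm\}}=\pm E_{\{\pm\}}$ from Lemma~\ref{l:040} holds independently of nontriviality is a sound (and worthwhile) precision, not a deviation.
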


We apply the corollary \ref{c:020} to clarification  of
the conditions
leading
to
identically zero function $\EEpm{\pm}$ defined by Eq.{~}\eqref{eq:150}.
Indeed, substituting therein $z=1$ 
and taking into account \Eqs{~}\eqref{eq:190}, one gets
\begin{equation*}
\mbox{$
\EEpm{\pm}(1)=\mp\sin\frac{1}{2}(\varphi_0\mp\pi/2).  
$}
\end{equation*}
Hence one of the functions $ \EEpm{+} $ and $ \EEpm{-} $
can, indeed, be identical zero and this
takes place %
if and
only if $\varphi_0=\pi/2\,(\hspace{-1.7ex}\mod \pi)$.
\begin{remark}\label{r:060}
\hangindent=2ex
\rm
~\\[-1em]
\begin{itemize}
\item
The varying of the initial value $\varphi_0=\varphi(0)$
of a solution to Eq.{~}\eqref{eq:010}
results in appearance of
some additional constant factors.
This is
the  only
distinction of the functions $\EEpm{\pm} $,
obtained by means of \Eqs{~}\eqref{eq:150},
from the ``fiducial'' ones  corresponding to, say,  $\varphi_0=0$.
Besides, with respect to the case $\varphi_0=0$,
the absolute values of these $\varphi(0)$-dependent
factors do not exceed $1$.
\item In case of the identical vanishing of one of the functions
$\EEpm{\pm} $, the corresponding sum in brackets in
 \Eqs{~}\eqref{eq:150} vanishes. Then the same sum but with the opposite
choice of the signs amounts to twice its first summand.
Accordingly, the following factorized representation
of the nontrivial eigenfunction $\EEpm{\cdot}$ still
produced by one of  \Eqs{~}\eqref{eq:150}
arises:
\begin{equation*}
 \EEpm{\cdot} \;\propto\;
(e^{\mu(z+1/z-2)}
z^{-\llll}
\eP(z)\eiphi(z))^{1/2}.
\end{equation*}
As we have mentioned, this situation occurs
if 
$\varphi_0=\pi/2
\,
(\hspace{-1.7ex}\mod\pi)  $.
\end{itemize}
\end{remark}

Resuming, we have our first key
\begin{theorem} \label{t:010}
Let a solution $\varphi(t)$ to the equation \eqref{eq:010}
on the segment \\  $\Xi=(-\pi\omega^{-1}, \pi\omega^{-1})$ be given.
Then the analytic continuations
of the functions
$\exp(\Imi 
          \varphi(t) )$ and
$\exp( 
          \int_0^t\cos\varphi(t)d t) $
from $\Xi$
to some
vicinity of $\Xi$
in $\mathbb{C}  $,
converted
by means of the transformation \eqref{eq:160}
to the functions $\eiphi(z)$ and $\Psi(z)$
holomorphic in the corresponding vicinity of the 
punctured circle
\eqref{eq:155},
determine
therein
the two solutions
$\EEpm{\pm}=\EEpm{\pm}(z)$
to Eq.{~}\eqref{eq:050}
by means of the  formulas \eqref{eq:150}.
The functions  $\EEpm{\pm}$
are linearly independent
unless one of them is the identically zero function
that takes place if and only if
either
$\varphi(0)=\pi/2\,(\hspace{-1.7ex}\mod 2\pi)$
(leading to  $\EEpm{+}(z)\equiv0$)
or
 $\varphi(0)=-\pi/2\,  (\hspace{-1.7ex}\mod 2\pi)$
(leading to $\EEpm{-}(z)\equiv0$, respectively).
In case of linear independence the
functions $\EEpm{\pm}$ constitute the
basis of the space  ${\mathbf\Omega}$ of solutions to Eq.{~}\eqref{eq:050}.
\\
The functions $\EEpm{\pm} $
are also
the eigenfunctions with eigenvalues $\pm1$,
respectively,
of the linear
operator $\opC$ defined by Eq.{~}\eqref{eq:030};
$\opC$ %
is, thus, represented
in the
basis $\{\EEpm{+},\EEpm{-}  \}  $
by the diagonal matrix
$\diag{1,-1}$.
The linear space
${\mathbf\Omega}$
is invariant with respect to the operator $\opC$
which
acts on it as an
involutive automorphism.
\end{theorem}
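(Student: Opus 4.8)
The plan is to assemble the theorem from the preceding lemmas, which already carry the analytic weight; the residual work is only to check parameter consistency, to read off the vanishing criterion, and to invoke elementary linear algebra. First I would invoke Lemma \ref{l:050}: given a real-analytic solution $\varphi(t)$ of Eq.~\eqref{eq:050} on $\Xi$, together with $P(t)=\int_0^t\cos\varphi(\tilde t)\,d\tilde t$, the substitution \eqref{eq:150} followed by analytic continuation produces functions $\Phi(z)$, $\Psi(z)$ that are holomorphic and zero-free in a vicinity of the punctured circle \eqref{eq:140} and obey Eq.s~\eqref{eq:100}, \eqref{eq:110} and the normalization \eqref{eq:120}. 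This places me inside the hypotheses of Lemma \ref{l:040}, so the two functions $E_{\{\pm\}}$ defined by Eq.~\eqref{eq:130} are eigenfunctions of $\opC$ with eigenvalues $\pm1$, provided neither is identically zero. By Lemma \ref{l:020} an eigenfunction of $\opC$ with eigenvalue $\nu\neq0$ solves Eq.~\eqref{eq:020} with $\lambda=\nu^2/(2\omega)^2-\mu^2$; since $\omega$ was fixed by \eqref{eq:060} so that the constraint \eqref{eq:030} holds, the value $\nu=\pm1$ (whence $\nu^2=1$) is consistent and each nontrivial $E_{\{\pm\}}$ is a genuine solution of the prescribed Eq.~\eqref{eq:020}. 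This settles the first assertion.

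Next I would pin down the vanishing criterion. By Corollary \ref{c:020} one has $E_{\{\pm\}}\equiv0$ if and only if $E_{\{\pm\}}(1)=0$, and evaluating \eqref{eq:130} at $z=1$ with the initial values \eqref{eq:180} yields \eqref{eq:230}, that is $E_{\{\pm\}}(1)=\mp\sin\tfrac12(\varphi_0\mp\pi/2)$. Reading off the zeros, $E_{\{+\}}(1)=0$ exactly when $\varphi_0\equiv\pi/2\pmod{2\pi}$ and $E_{\{-\}}(1)=0$ exactly when $\varphi_0\equiv-\pi/2\pmod{2\pi}$; since $\pi/2\not\equiv-\pi/2\pmod{2\pi}$, at most one of the two can vanish, which records the ``not both'' dichotomy and matches the sign assignments claimed in the statement. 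The union of these two cases is precisely $\varphi_0\equiv\pi/2\pmod{\pi}$, consistent with the earlier remark; off this locus both $E_{\{\pm\}}$ are nontrivial, which is the generic situation addressed below.

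Finally, for the basis and matrix statements I would use linear algebra. When both $E_{\{\pm\}}$ are nontrivial they are eigenfunctions of $\opC$ for the distinct eigenvalues $+1$ and $-1$, hence linearly independent (applying $\opC$ to a vanishing linear combination and taking sums and differences forces both coefficients to zero); as $\mathbf\Omega$, the solution space of the second-order ODE \eqref{eq:020}, is two-dimensional, the pair $\{E_{\{+\}},E_{\{-\}}\}$ is a basis of $\mathbf\Omega$. In this basis the relations $\opC E_{\{\pm\}}=\pm E_{\{\pm\}}$ state exactly that $\opC$ is represented by $\diag{1,-1}$; in particular $\opC$ carries $\mathbf\Omega$ into itself, and since $\diag{1,-1}^2=\Id$ --- equivalently, by Lemma \ref{l:010}, since $\opC\circ\opC[E]=E$ for every $E\in\mathbf\Omega$ --- the restriction of $\opC$ to $\mathbf\Omega$ is an involution, hence a bijective automorphism. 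I do not anticipate a genuine obstacle here: the proof is essentially a synthesis of Lemmas \ref{l:020}, \ref{l:040}, \ref{l:050} and Corollary \ref{c:020}, and the only points demanding care are the parameter bookkeeping that forces $\nu^2=1$ and the $\bmod\,2\pi$ versus $\bmod\,\pi$ accounting in the vanishing criterion.
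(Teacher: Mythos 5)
Your proposal is correct and takes essentially the same approach as the paper: there the theorem is introduced with ``Resuming, we have our first key theorem,'' i.e.\ it is presented as precisely the synthesis of Lemma~\ref{l:050} (existence of $\Phi,\Psi$), Lemma~\ref{l:040} (the formulas \eqref{eq:130} give eigenfunctions), Lemma~\ref{l:020} with the parameter choice \eqref{eq:060} (eigenfunctions solve Eq.~\eqref{eq:020}), Corollary~\ref{c:020} together with Eq.~\eqref{eq:230} (the vanishing criterion), and the standard linear-algebra conclusion that you spell out. Your explicit handling of the $\bmod\,2\pi$ versus $\bmod\,\pi$ accounting and of the involution property matches what the paper leaves implicit, and is accurate.
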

\begin{corollary} \label{c:030}
The eigenfunctions of the  operator $\opC$ with
eigenvalues $\pm1$
are exactly the non-trivial solutions to  Eq.{~}\eqref{eq:050}
which obey the initial data constraint \eqref{eq:090}.
\end{corollary}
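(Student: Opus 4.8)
The plan is to turn the restriction of $\opC$ to the two-dimensional solution space $\mathbf\Omega$ into an honest $2\times2$ matrix acting on Cauchy data at the ordinary point $z=1$, and then read off its eigenspaces. First I would record that $\opC$ genuinely preserves $\mathbf\Omega$ and is an involution there: by the identity \eqref{eq:040} one has $\opC\circ\opC[\somEE]=\somEE$ whenever $\somEE$ solves \eqref{eq:020}, and applying $\opC$ once more to this relation shows that $\opC[\somEE]$ again solves \eqref{eq:020}; hence $\opC|_{\mathbf\Omega}$ is a linear involution. Since $z=1$ is an ordinary point of \eqref{eq:020}, the evaluation map $\mathrm{ev}\colon \somEE\mapsto(\somEE(1),\somEE'(1))$ is a linear isomorphism $\mathbf\Omega\to\mathbb{C}^2$, under which $\opC$ becomes a matrix $M$ satisfying $M^2=\Id$.

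The single genuinely computational step is to determine $M$ from the definition \eqref{eq:010}. Evaluating $\opC[\somEE]$ at $z=1$ gives $\opC[\somEE](1)=2\omega\,(\somEE'(1)-\mu\,\somEE(1))$ at once. For the lower row I would differentiate \eqref{eq:010}, set $z=1$, and eliminate the second derivative $\somEE''(1)$ using the equation \eqref{eq:020} itself; this yields $\opC[\somEE]'(1)=2\omega\,(\lambda\,\somEE(1)+\mu\,\somEE'(1))$. Hence
\begin{equation*}
M=2\omega\begin{pmatrix}-\mu & 1\\ \lambda & \mu\end{pmatrix}.
\end{equation*}
This is routine bookkeeping; the only care needed is that the $\somEE''(1)$-elimination via \eqref{eq:020} be carried out correctly.

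The eigenstructure is then immediate. One has $\mathrm{tr}\,M=0$, while $\det M=-(2\omega)^2(\mu^2+\lambda)=-1$ by \eqref{eq:060}, so the characteristic polynomial is $t^2-1$ and $M$ carries the simple eigenvalues $+1$ and $-1$, each with a one-dimensional eigenspace. Writing $Mv=\pm v$ with $v=(\somEE(1),\somEE'(1))$, the first component reads $\somEE'(1)=(\mu\pm(2\omega)^{-1})\somEE(1)$, which is exactly the constraint \eqref{eq:070} with the matching sign; the second component is then automatically consistent, as forced by $\mathrm{tr}\,M=0$, $\det M=-1$ (equivalently $M^2=\Id$). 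Pulling back through $\mathrm{ev}$, the $\pm1$ eigenspace of $\opC$ on $\mathbf\Omega$ coincides precisely with the subspace of solutions of \eqref{eq:020} obeying \eqref{eq:070}. Therefore the non-trivial solutions satisfying \eqref{eq:070} are exactly the eigenfunctions of $\opC$ with eigenvalues $\pm1$, which is the assertion.

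I do not expect a serious obstacle, only the attention to the matrix computation noted above. A pleasant by-product is that this argument re-establishes, independently of the explicit construction \eqref{eq:130}, that both eigenspaces are genuinely one-dimensional, so that eigenfunctions for each sign always exist irrespective of the degeneracy $\varphi(0)=\pm\pi/2$ flagged in Theorem~\ref{t:010}. An alternative, slightly shorter route — should one prefer to lean on that theorem — is to expand a given solution in the basis $\{\EEpm{+},\EEpm{-}\}$, apply the functional $\somEE\mapsto\somEE'(1)-(\mu+(2\omega)^{-1})\somEE(1)$, and use Corollary~\ref{c:010} ($\EEpm{-}(1)\neq0$) to conclude that the constraint \eqref{eq:070} kills the $\EEpm{-}$-component; but the matrix reduction above is self-contained and I would favour it.
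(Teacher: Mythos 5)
Your proof is correct, but it follows a genuinely different route from the paper's. The paper states Corollary \ref{c:030} as an immediate consequence of Theorem \ref{t:010}: necessity of the constraint \eqref{eq:070} is the first item of Lemma \ref{l:030} (evaluate \eqref{eq:090} at $z=1$), while sufficiency rests on the \emph{existence} of eigenfunctions $E_{\{\pm\}}$ supplied by the analytic construction \eqref{eq:130} (Lemmas \ref{l:040} and \ref{l:050}), on $E_{\{\pm\}}(1)\neq0$ (Corollary \ref{c:010}), and on uniqueness of the Cauchy problem at the ordinary point $z=1$: a non-trivial solution obeying \eqref{eq:070} shares its Cauchy data, up to a scalar, with $E_{\{\pm\}}$ and hence is a multiple of it. You bypass that existence machinery entirely: you transport $\opC|_{\mathbf{\Omega}}$ through the evaluation isomorphism $\somEE\mapsto(\somEE(1),\somEE'(1))$ and compute its matrix; your entries are right (I verified the elimination of $\somEE''(1)$ via \eqref{eq:020}, which gives $\opC[\somEE]'(1)=2\omega(\lambda\somEE(1)+\mu\somEE'(1))$), so $\mathrm{tr}\,M=0$ and, by \eqref{eq:060}, $\det M=-1$, whence the eigenvalues are $\pm1$ with one-dimensional eigenspaces; since $M\mp\Id$ has rank one and its first row is nonzero (because $2\omega\neq0$), each eigenspace is cut out exactly by the corresponding condition \eqref{eq:070}, which is the corollary. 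What each approach buys: yours is elementary, self-contained, needs no reality assumptions on $\llll,\mu,\omega$ (which Lemma \ref{l:050} does require), and independently re-proves that eigenfunctions of both signs always exist with each eigenspace exactly one-dimensional --- recovering Corollary \ref{c:010} and showing the degeneracy $\varphi(0)=\pm\pi/2$ is an artifact of the particular construction, not of the operator. The paper's heavier route, on the other hand, is what produces the explicit formulas \eqref{eq:130} tying the eigenfunctions to solutions of \eqref{eq:050}, which is the actual subject of the paper; your closing alternative (expanding in the basis $\{E_{\{+\}},E_{\{-\}}\}$ and invoking Corollary \ref{c:010}) is essentially that implicit argument made explicit.
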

\begin{remark}\label{r:070}
\hangindent=2ex 
\rm  
Since the operator
$\opC$ is involutive
any non-trivial solution to
Eq.{~}\eqref{eq:050}
is
either its eigenfunction itself
or the expressions $const\cdot(\somEE\pm\opC\somEE)$
constitute a pair of such eigenfunctions
which are linearly independent. Adjusting the above factor $const$,
they can be made real (self-conjugated, see the next section).
\end{remark}

\section*{\protect\centering{\sc 
 Self-Conjugation Property of Eigenfunctions of
the Operator  $\opC$
}}\label{s:060}

The explicit formulas for eigenfunctions of the operator
$\opC$ enables one an easy establishing of their invariance with respect
to the complex conjugation. However,
the
analogous relations
for the
functions $\eiphi$ and $\Psi$ involved
in $\EEpm{\pm}$ definition \eqref{eq:150}
have to be derived beforehand. %
To that end,
let us
introduce the following
auxiliary
working definition.
\begin{definition}\label{d:010}
Let $\GamMa(z)$ be any function holomorphic
in some connected
and simply connected
open subset of $ \mathbb C$ containing the point
$z=1$. We shall
name the function
\begin{equation}
     \label{eq:260}
\du{\GamMa}(z)
=
\overline{ {\GamMa}(1/\overline z)}
\end{equation}
{\sl dual\/} to the function $\GamMa(z)$.
\end{definition}
\begin{remark}\label{r:080}
\hangindent=2ex
\rm~
The above definition obviously
implies that
\begin{itemize} 
     \item The 
function dual to a holomorphic function
is also holomorphic in some open set containing the point $z=1$;
the intersection of the domains of ${\GamMa}$ and
$\du{\GamMa}$
is open, non-empty, and also contains $1$. 
       \item ``The
duality map''\; $\du{}:\GamMa\mapsto \du \GamMa  $ 
is 
involutive; in particular, %
the function ${\GamMa}(z)$ is, in turn, dual to the function
$\du{\GamMa}(z)$.
\end{itemize} 
\end{remark}

              \begin{lemma}\label{l:070}
Let the \analytic{} function $\eiphi=\eiphi(z)$
be a solution to
Eq.{~}\eqref{eq:120}
obeying the
constraint $|\eiphi(1)|=1  $ (cf.\ \Eqs{~}\eqref{eq:140}).
Then
\begin{equation}
  \label{eq:270}
\eiphi(z)\du\eiphi(z)=1.
\end{equation}
                  \end{lemma}\noindent
To prove the lemma,
we note first that
the function $\du{\eiphi}=\du{\eiphi}(z)$
dual to solution  ${\eiphi}(z)$ to Eq.{~}\eqref{eq:120}
obeys the equation
\begin{equation}
         \label{eq:280}
z \du{\eiphi}'
+(\Imi 2\omega)^{-1}(\du{\eiphi}^2-1)=-
(\llll+\mu(z+z^{-1})){\du\eiphi}.
\end{equation}
Then
a straightforward
computation shows that,
as a consequence of \eqref{eq:120} and  \eqref{eq:280}, it holds
\begin{equation}
          \label{eq:290}
\frac{d}{d z}({\eiphi}(z){\du\eiphi}(z)-1)
=(-2\Imi\omega z)^{-1}({\eiphi}(z)+{\du\eiphi}(z))({\eiphi}(z){\du\eiphi}(z)-1).
\end{equation}
Now let us introduce
an
auxiliary sequence of
functions $\delta_n$ (in fact, polynomials)
of the
three arguments
$z, \eiphi$, and $\du\eiphi$ which all are
regarded here, for a time, as free complex variables.
(It is worth noting that the functions
$\delta_n$ depends also on the parameters $\llll, \mu,\omega$
but these their arguments will be 
suppressed for the sake of the symbolism  simplicity.)
The functions $\delta_n$
are defined by means of the following recurrent scheme:
{\small\begin{eqnarray}
          \label{eq:300}
\delta_1&=&z(\eiphi+\du\eiphi),
\\
  \label{eq:310}
 \hspace{2em}
\delta_{n+1}&=&
(\eiphi+\du\eiphi+4\Imi\omega n  )\delta_n
-2 \Imi \omega z^2
\frac{\partial \delta_n}{\partial z}
\\&&
+(z(\eiphi^2-1) -2\Imi\omega(\llll z +\mu(z^2+1))\eiphi)
\frac{\partial \delta_n}{\partial \eiphi}
\nonumber\\&&
+(z(\du\eiphi^2-1) +2\Imi\omega(\llll z +\mu(z^2 +1))\du\eiphi)
\frac{\partial \delta_n}{\partial\du\eiphi}, n=1,2,\cdots
\nonumber
\end{eqnarray}}
We utilize them for introduction of
the functions
\begin{equation}
       \label{eq:320}
\deLta_n(z, \eiphi, \du\eiphi)
=
(-2\Imi\omega z^2)^{-n}\delta_n(z, \eiphi, \du\eiphi)
 (\eiphi\du\eiphi -1),\;n=1,2,\cdots.
\end{equation}
               \begin{lemma}\label{l:080} 
Under the conditions of the lemma  \ref{l:070},
it holds
\begin{equation}
     \label{eq:330}
\frac{d}{d z}\deLta_n(z, \eiphi(z), \du\eiphi(z))
=
\deLta_{n+1}(z, \eiphi(z), \du\eiphi(z)), n=1,2\cdots.
\end{equation}
                 \end{lemma}
\begin{proof}
It is easy to show that,
in view of
Eq.{~}\eqref{eq:120}
and Eq.{~}\eqref{eq:280},
the above assertion is
equivalent to
Eq.{~}\eqref{eq:300}
for $n=1$ and to
Eq.{~}\eqref{eq:310}
for $n>1$.
\end{proof}
                 \begin{corollary}\label{c:040}
Under the conditions of the lemma  \ref{l:070},
it holds
\begin{equation}
     \label{eq:340}
\frac{d^n}{d z^n}({\eiphi}(z){\du\eiphi}(z)-1)
=\deLta_n(z, \eiphi(z), \du\eiphi(z)), \; n=1,2,\cdots.
\end{equation}
             \end{corollary}
\begin{proof}
In case $n=1$ the above equation follows from \Eqs{}
\eqref{eq:290} and
\eqref{eq:300}, and the definition \eqref{eq:320}.
It is extended
to higher
derivative orders $n=2,3,\cdots  $
by
means of the mathematical induction
based on 
Eq.{~}\eqref{eq:330}.
\end{proof}
              \begin{corollary}\label{c:050}
Under the conditions
of the lemma \ref{l:070},
all the derivatives of the function $\eiphi(z)\du\eiphi(z) -1$ vanish at
the point
$z=1$.
          \end{corollary}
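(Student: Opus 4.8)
The plan is to extract the statement directly from corollary \ref{c:040}, exploiting the factored shape of the functions $\deLta_n$. Set $g(z)\colonequals\eiphi(z)\du\eiphi(z)-1$. First I would invoke corollary \ref{c:040}, which already furnishes, for every $n\geq1$,
\[
\frac{d^n}{dz^n}g(z)=\deLta_n(z,\eiphi(z),\du\eiphi(z)),
\]
and then simply read off the definition \eqref{eq:300}: each $\deLta_n$ carries the common factor $(\eiphi\du\eiphi-1)=g$, so that $g^{(n)}(z)=c_n(z)\,g(z)$ with $c_n(z)\colonequals(-2\Imi\omega z^2)^{-n}\,\delta_n(z,\eiphi(z),\du\eiphi(z))$. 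The whole question thus collapses to the value of $g$ at the single point $z=1$, provided the coefficients $c_n$ stay finite there.

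Next I would evaluate $g(1)$. By the duality definition \eqref{eq:245} one has $\du\eiphi(1)=\overline{\eiphi(1/\overline{1})}=\overline{\eiphi(1)}$, so that $\eiphi(1)\du\eiphi(1)=|\eiphi(1)|^2$. The hypothesis $|\eiphi(1)|=1$ inherited from lemma \ref{l:070} then yields $g(1)=|\eiphi(1)|^2-1=0$. Feeding this back, $g^{(n)}(1)=c_n(1)\,g(1)=0$ for every $n\geq1$, which is precisely the asserted vanishing of all derivatives of $\eiphi\du\eiphi-1$ at $z=1$.

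I do not expect a genuine obstacle here: the analytic content has already been discharged in the recursion of lemma \ref{l:075} and in corollary \ref{c:040}, so what remains is a boundary evaluation together with a regularity check. The single point that deserves a line of care is confirming that none of the factors in $c_n$ blows up at $z=1$; but this is immediate, since $z=1$ avoids the only pole $z=0$ of the prefactor $(-2\Imi\omega z^2)^{-n}$ (here the standing assumption $\omega\neq0$ is used), while $\delta_n$ is a polynomial in the finite quantities $z,\eiphi(1),\du\eiphi(1)$. Hence $c_n(1)$ is well defined and the argument goes through without complication.
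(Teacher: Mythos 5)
Your proposal is correct and follows essentially the same route as the paper: both factor $\frac{d^n}{dz^n}\big(\eiphi(z)\du\eiphi(z)-1\big)$ via Corollary \ref{c:040} and the definition \eqref{eq:300} into a function regular at $z=1$ times $\eiphi(z)\du\eiphi(z)-1$ itself, and then use $\eiphi(1)\du\eiphi(1)=|\eiphi(1)|^2=1$ to conclude. Your explicit check that the coefficient $(-2\Imi\omega z^2)^{-n}\delta_n$ is finite at $z=1$ is a detail the paper compresses into the word ``holomorphic,'' but the argument is the same.
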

\begin{proof}
In accordance with $\deLta_n$ definition
\eqref{eq:320}
and Eq.{~}\eqref{eq:340},
for any $n=1,2,\cdots$
the derivative ${d^n}(\eiphi(z)\du\eiphi(z)-1)/{d z^n}  $
factorizes into a function holomorphic in vicinity
of the point $z=1$ times the function $\eiphi(z)\du\eiphi(z)-1 $ itself.
The latter is zero at the unity
(since $\eiphi(1)\du\eiphi(1)=|\eiphi(1)|^2=1$);
accordingly, the above multiple derivative 
is zero thereat as well. Thus all such derivatives
at $z=1$ are null.
\end{proof}
\begin{proof}[{Proof of the lemma  \ref{l:070} }]
Since the function $\eiphi(z)\du\eiphi(z)-1$ is analytic at the point $z=1$,
 the above
corollary  implies
its  identical vanishing and thus the
validity of the assertion of the  lemma \ref{l:070}.
\end{proof}

Similarly to above, let us consider
how the
function $\du\eP= \du\eP(z)$
dual to 
solution $ \eP = \eP(z)$ to
Eq.{~}\eqref{eq:130} is related to $ \eP $.
A straightforward
computation
establishes the fulfillment of
the equation
\begin{equation}
      \label{eq:350}
2 \Imi \omega z \du\eP'= ( \du\eiphi+\du\eiphi^{-1} )\du\eP.
\end{equation}
As a consequence, it holds
\begin{equation}
     \label{eq:360}
\frac{d}{d z}(\eP-\du\eP )=(4\Imi\omega z )^{-1}(\eiphi+\du\eiphi)(\eP-\du\eP ),
\end{equation}
provided
the functions  $ \eiphi=\eiphi(z)$  and $\du\eiphi= \du\eiphi(z)$
(mutually dual) obey Eq.{~}\eqref{eq:270}.
Let us notice
now that, as the  functions $ \eiphi$ and $ \du\eiphi $
are given, Eq.{~}\eqref{eq:360} can be regarded
as a 
linear homogeneous first order ODE 
for the holomorphic function $\delta=\delta(z)= \eP(z)-\du\eP(z)$ which is
correctly defined in the intersection of the domains of the functions
$ \eiphi$ and $\du\eiphi$ (with zero removed, if necessary).
As a consequence, one may claim that
the function
 $\delta$ either has no zeros in its domain or is the identically zero function.
But if the function $\eP(z)$ complies with ``the initial condition'' \eqref{eq:190}
then $\du\eP(1)=1$ as well implying $\delta(1)=0$. Thus $\delta(z)\equiv0$ at least
in a connected vicinity
of the point $z=1$.
We have therefore proven  the following
         \begin{theorem}\label{t:020}
Let the functions $\eiphi(z)$ and $ \eP(z)$ be holomorphic
in some connected
and simply connected
open subset of $\mathbb{C}^*$ containing the point
$z=1$,
obeying
therein the system of equations \eqref{eq:120}, \eqref{eq:130};
let the
constraints  \eqref{eq:140} be also fulfilled.
Then the equation
\eqref{eq:270}
and the equation
\begin{equation}
    \label{eq:370}
              \du\eP(z)=\eP(z)
\end{equation}
hold true.
           \end{theorem}
The lemma \ref{l:050} 
and the above theorem lead to the following
       \begin{corollary}\label{c:060}
Under the conditions of the theorem  \ref{t:020},
it holds
$|\eiphi|=1$ and $\Im \eP=0  $ on
``the punctured unit circle''  \eqref{eq:155}.
\end{corollary}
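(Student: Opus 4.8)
The plan is to reduce the entire statement to the single elementary observation that, restricted to the unit circle, the duality map of Definition \ref{d:010} collapses to ordinary complex conjugation. Indeed, if $|z|=1$ then $1/\cc z=z$, so for any function $\GamMa$ holomorphic near such a point the defining formula \eqref{eq:245} gives $\du\GamMa(z)=\overline{\GamMa(1/\cc z)}=\overline{\GamMa(z)}$. Once this is recognized, the two relations already furnished by Theorem \ref{t:020} acquire a transparent meaning on $\puncturedS$.

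First I would invoke Lemma \ref{l:050} to guarantee that the solutions $\eiphi,\eP$ of the Cauchy problem for \eqref{eq:100}, \eqref{eq:110} with data \eqref{eq:120} are holomorphic (and non-vanishing) in a connected open vicinity of the whole punctured circle \eqref{eq:140}, a set that contains the point $z=1$. Since the identities \eqref{eq:250} and \eqref{eq:350} are analytic relations valid near $z=1$, the principle of analytic continuation propagates them throughout that connected vicinity; in particular they hold at every point of $\puncturedS$ itself.

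Then I would simply evaluate the two identities along the circle. Taking $z\in\puncturedS$ and using $\du\eiphi(z)=\overline{\eiphi(z)}$, equation \eqref{eq:250} reads $\eiphi(z)\overline{\eiphi(z)}=|\eiphi(z)|^2=1$, whence $|\eiphi|=1$ on $\puncturedS$. Likewise, with $\du\eP(z)=\overline{\eP(z)}$, equation \eqref{eq:350} becomes $\overline{\eP(z)}=\eP(z)$, i.e.\ $\eP(z)$ is real and $\Im\eP=0$ there.

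The arithmetic itself is immediate once the dual is identified with conjugation; the only step requiring a word of care — and hence the main (mild) obstacle — is the passage from the local validity of \eqref{eq:250} and \eqref{eq:350} near $z=1$, as asserted in the proof of Theorem \ref{t:020}, to their validity along the entire arc $\puncturedS$. This is settled by the connectedness of the domain supplied by Lemma \ref{l:050} together with analytic continuation, so no genuine difficulty remains.
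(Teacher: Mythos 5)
Your proposal is correct and follows essentially the same route as the paper: the paper's one-line proof likewise rests on the observation that $\overline z=z^{-1}$ on the unit circle (so the dual reduces to conjugation) and then reads off the conclusion from \eqref{eq:250} and \eqref{eq:350}, with Lemma \ref{l:050} cited just before the corollary to supply the domain covering $\puncturedS$. Your extra remark about propagating the identities from a neighbourhood of $z=1$ to the whole punctured circle by analytic continuation merely makes explicit what the paper leaves implicit.
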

~\\[-3em]
\noindent
\begin{proof}
Since $\overline z= z^{-1} $ on the unit circle in $\mathbb C  $,
the assertions
to be proven
follow from \Eqs{~}\eqref{eq:270}
and \eqref{eq:370}.
\end{proof}
\begin{remark}\label{r:090}
\hangindent=2ex
\rm
We have shown, in particular, that
any 
 holomorphic functions $ \eiphi, \eP $ obeying
conditions 
of the theorem \ref{t:020}
determine 
the smooth real valued functions $\varphi(t), P(t)$
verifying the equations \eqref{eq:010} and \eqref{eq:180}, respectively.
\end{remark}

Now a short straightforward computation leaning on \Eqs{~}\eqref{eq:270}
and \eqref{eq:370} proves the following
         \begin{theorem}\label{t:030}
 Let the functions $\eiphi(z)$ and $ \eP(z)$
 obey the system of equations \eqref{eq:120}, \eqref{eq:130}
 and the
 constraints  \eqref{eq:140}.
Then
the functions
$ \EEpm{+}(z)$
and
$ \EEpm{-}(z)$
defined by \Eqs{~}\eqref{eq:150}
 are real
(self-conjugated), %
i.e.\ obey the constraints
\begin{equation}
         \label{eq:380}  
 \overline{ \EEpm{\pm}(\overline z)}=\EEpm{\pm}(z).
\end{equation}
          \end{theorem}

\section*{\protect\centering{\sc 
Representation of
general solution to the equation of RSJ model in terms of
solutions to special double confluent Heun equation
}}\label{s:070}

Having outlined the way of
constructing of solutions to Eq.{~}\eqref{eq:050} from
solutions to  Eq.{~}\eqref{eq:010},
 we
proceed with description of
the
inverse relationship.
It can be expressed  in the form of
the following
\begin{theorem} \label{t:040}
Let the \analytic{}
functions $\EEpm{+}(z)$ and $\EEpm{-}(z) $
be
the real (self-conjugated, see Eq.{~}\eqref{eq:380}) eigenfunctions
of the operator $\opC$ defined by Eq.{~}\eqref{eq:030}
with the corresponding eigenvalues $\pm1$;
let
also 
$\alpha$ be an arbitrary real constant. We define
the holomorphic
functions $\eiphi(z)$ and $\Theta(z)$
as follows:
{\small \begin{eqnarray}
\label{eq:390}
\hspace{2em}
\eiphi(z)
\!\!   &
        = 
       &\!\!
-\Imi  z^l
\frac{
\cos(\frac{1}{2}\alpha)\EEpm{+}(z)
+\Imi
\sin(\frac{1}{2}\alpha)\EEpm{-}(z)
}{
\cos(\frac{1}{2}\alpha)\EEpm{+}(1/z)
-\Imi
\sin(\frac{1}{2}\alpha)\EEpm{-}(1/z)
},
\\ 
\label{eq:400}
\Theta(z)
\!\!&
     = 
    &\!\!
-\Imi
\frac{
\cos(\frac{1}{2}\alpha)\EEpm{+}^2(1)\EEpm{-}(z)
+\Imi
\sin(\frac{1}{2}\alpha)\EEpm{-}^2(1)\EEpm{+}(z)
}{
\EEpm{+}(1)\EEpm{-}(1)
\big(
\cos(\frac{1}{2}\alpha)\EEpm{+}(z)
+\Imi
\sin(\frac{1}{2}\alpha)\EEpm{-}(z)
\big)
}.
\end{eqnarray}}
Then
\begin{itemize}
\item
the continuous function $\varphi(t)$ of the real variable $t$
determined
by
the equation
\begin{equation}
    \label{eq:410}
e^{\Imi\varphi(t)}=\eiphi(e^{\Imi\omega t})
\end{equation}
is well defined,
real valued, smooth  and verifying Eq.{~}\eqref{eq:010};
\item
the functions $P(t)$ and $Q(t)$
defined as follows
\begin{equation*}
P(t)
=
-\log(-\Im \Theta(e^{\Imi\omega t})), \;
Q(t)
= 
\Re\Theta(e^{\Imi\omega t})
\end{equation*}
are well defined,
real valued, smooth
and are related to %
the function
 $\varphi(t)$
by the subsequent quadratures
as follows
\begin{equation}
  \label{eq:430}
P(t)=\int_0^t  \cos\varphi(\tilde t)\,  d\,\tilde t,
\;
Q(t)= \int_0^t e^{-P(\tilde t)}  \sin\varphi(\tilde t) \, d \,\tilde t.
\end{equation}
\end{itemize}
\end{theorem}
\begin{remark}\label{r:100}
\hangindent=2ex
\rm
In view of the lemma \ref{l:040},
the both functions $\EEpm{\pm}(z)$
obey Eq.{~}\eqref{eq:050} and
one learns
from lemmas
\ref{l:040}
and
\ref{l:050}
that they always exist.
Hence, the functions $\eiphi(z)$ and $\Theta(z)$,
as well as the functions $\varphi(t), P(t), Q(t)$ which they give rise
to, %
are built
(and always can be built)
upon
solutions of this equation.
\end{remark}
\begin{proof}[Theorem proof]
Let us notice
that
since the functions $\EEpm{\pm}(z)$ obey
 a linear homogeneous second order differential equation
with coefficients \analytic{} everywhere except at zero (Eq.{~}\eqref{eq:050} times $z^{-2}$),
they are   themselves \analytic{}
 everywhere except, perhaps, at zero.
Besides,
in accord with the corollary \ref{c:020},
 $\EEpm{+}(1)\not=0\not= \EEpm{-}(1)$ that
eliminates 
the source of
an a priori conceivable
fault of the definition \eqref{eq:400}.

Now let us consider
the
identity
\begin{equation}
  \label{eq:440}
\begin{aligned}
&\Imi e^{\Imi\varphi(t)}
\big(\dot \varphi(t)+\sin\varphi(t)
-
\omega(\llll +2\mu\cos\omega t )
\big)   \equiv& \!\!\!
\big(e^{\Imi\varphi(t)}-\eiphi(e^{\Imi\omega t})\big)^{\scalebox{0.4}{\!\mbox{$\bullet$}}} 
\\ &&& \hspace{-27.4em}
+\left(
2^{-1}(e^{\Imi\varphi(t)}+\eiphi(e^{\Imi\omega t}))
-\Imi\omega (\llll^{\mathstrut}
+ 2 \mu \cos\omega t   )
\right)
\big(e^{\Imi\varphi(t)}-\eiphi(e^{\Imi\omega t})\big),
\end{aligned}
\end{equation}
which takes place for arbitrary smooth function $ \varphi(t) $
and which is proven by means of
straightforward computation
taking into account the
$\eiphi$ definition \eqref{eq:390} and
\Eqs{~}\eqref{eq:110}.
Thus it
follows from \eqref{eq:440}
that if
Eq.{~}\eqref{eq:410} is fulfilled then $ \varphi(t) $ verifies Eq.{~}\eqref{eq:010}
with $A=2\omega\mu, B=\omega \llll  $ (cf.\ \Eqs~\eqref{eq:170}).

Further,
let us note that
since the functions
$\EEpm{\pm}(z)$ are real,
one obtains
in case of a  real $\alpha$
 the following equalities:
$$
\overline{\eiphi(z)}=
\Imi  \bar z^l
\frac{
\cos(\frac{1}{2}\alpha)\EEpm{+}(\bar z)
-\Imi
\sin(\frac{1}{2}\alpha)\EEpm{-}(\bar z)
}{
\cos(\frac{1}{2}\alpha)\EEpm{+}(1/\bar z)
+\Imi
\sin(\frac{1}{2}\alpha)\EEpm{-}(1/\bar z)
}
\equiv \eiphi(1/\bar z)^{-1}.
$$
For $z= e^{\Imi\omega t} $ and real $t$, it holds $1/\bar z=z  $.
Accordingly,
one infers from  above that
$\overline{\eiphi(e^{\Imi\omega t} )} =\eiphi(e^{\Imi\omega t} )^{-1}$
and, consequently, $|{\eiphi(e^{\Imi\omega t} )}|=1$.
Then Eq.{~}\eqref{eq:410} yields $ |e^{\Imi\varphi(t)}| =1 $, and the
real-valued smooth function $\varphi(t)$ is determined
in terms of the logarithm of the non-zero smooth function
${\eiphi(e^{\Imi\omega t} )} $ in the standard way.
The first assertion of the theorem is therefore proven.

Addressing now the
second assertion, %
 let us introduce,
in addition to the function $\Theta(z)$, the function $\tilde\Theta(z)$
as follows:
\begin{equation}
\label{eq:460}
\tilde\Theta(z)
 = 
\Imi
\frac{
\cos(\frac{1}{2}\alpha)\EEpm{+}^2(1)\EEpm{-}(1/z)
-\Imi
\sin(\frac{1}{2}\alpha)\EEpm{-}^2(1)\EEpm{+}(1/z)
}{
\EEpm{+}(1)\EEpm{-}(1)
\big(
\cos(\frac{1}{2}\alpha)\EEpm{+}(1/z)
-\Imi
\sin(\frac{1}{2}\alpha)\EEpm{-}(1/z)
\big)
}.
\end{equation}

The functions $\Theta=\Theta(z) $ and $\tilde\Theta=\tilde\Theta(z) $
obey
the following system of the two linear homogeneous first order
differential
equations
\begin{equation}
  \label{eq:470}
\Imi\omega z \Theta'= -\eiphi^{-1}( \Theta - \tilde\Theta ),\;
\end{equation}
This is
the direct consequence of definitions and \Eqs{~}\eqref{eq:110}.

A straightforward verification also based on definitions
shows that for real eigenfunctions $ \EEpm{\pm} $
(and for real constant $\alpha$) it holds
$\overline{\Theta(z)}  = \tilde\Theta(1/\bar z)$,
i.e.\
the function $ \tilde\Theta $
defined by means of
a separate formula \eqref{eq:460}
is actually dual to the function $\Theta$ (see Eq.{~}\eqref{eq:260}).
As a consequence,
it holds
$\tilde\Theta(e^{\Imi\omega t} ) = \overline {\Theta(e^{\Imi\omega t}}) $. Then
\Eqs{~}\eqref{eq:470} yield the equation
\begin{equation*}
\frac{d}{d\,t}
\Theta(e^{\Imi\omega t})=-\eiphi(e^{\Imi\omega t})^{-1}
\big(
\Theta(e^{\Imi\omega t})-\overline{\Theta(e^{\Imi\omega t})}
\big).
\end{equation*}
Separating its
real and imaginary parts
and taking into account Eq.{~}\eqref{eq:410}, one gets
\begin{equation*}
\begin{aligned}
\frac{d}{d\,t} \Re\Theta(e^{\Imi\omega t})
=&-\Im \Theta(e^{\Imi\omega t})\sin\varphi(t),
\\ 
\frac{d}{d\,t} \Im\Theta(e^{\Imi\omega t})=&
-\Im \Theta(e^{\Imi\omega t})\cos\varphi(t).
\end{aligned}
\end{equation*}
In case of a given real valued function
$ \varphi(t) $,
the latter equation determining $ \Im\Theta $
can be
integrated by means of a quadrature.
Then the former one is integrated by means of another quadrature.
The integration constants are fixed
making use of
the initial
conditions $\Re \Theta(e^{\Imi\omega t})|_{t=0}=\Re\Theta(1)=0 $,
$\Im \Theta(e^{\Imi\omega t})|_{t=0}=\Im\Theta(1)=-1$
which 
follow from the
$\Theta$ definition \eqref{eq:400} evaluated at the point $z=1$.
The ultimate result of the integrations is just the formulas
\eqref{eq:430}. %
The theorem proof has been accomplished.
\end{proof}

Let us note that
for $t=0$ \Eqs{~}\eqref{eq:410} and \eqref{eq:390}
are equivalent to the equation
\begin{equation}
  \label{eq:500}
\mbox{$
 \EEpm{-}(1)\sin(\frac{1}{2}\varphi(0)-\frac{\pi}{4})\sin( \frac{1}{2}\alpha)
+
 \EEpm{+}(1)\cos(\frac{1}{2}\varphi(0)-\frac{\pi}{4})\cos( \frac{1}{2}\alpha)
=0.$}
\end{equation}
Obviously,
it is solvable
with respect to the angular parameter $\alpha$
for any given real $\varphi(0)$
(recall that the values of the
functions $ \EEpm{\pm} $  are real
when their argument is real and $ \EEpm{\pm}(1)\not=0$).
Conversely,
for any $\alpha \in[0,2\pi)$ some
 ``initial data''
 $\varphi_0=\varphi(0)\in[0,2\pi)$ obeying Eq.{~}\eqref{eq:500}
can be found.
We obtain, therefore, the following
        \begin{corollary}\label{c:070}
 \Eqs{}~\eqref{eq:410}, \eqref{eq:390}
enable one to obtain
any
solution to Eq.{~}\eqref{eq:010},
representing it in terms of solutions to Eq.{~}\eqref{eq:050}.
                \end{corollary}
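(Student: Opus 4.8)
The plan is to combine the forward construction of Theorem~\ref{t:040} with a surjectivity-plus-uniqueness argument. Theorem~\ref{t:040} already secures the ``soundness'' half: for every real value of the angular parameter $\alpha$ the function $\varphi(t)$ extracted from the eigenfunctions $\EEpm{\pm}$ through \Eqs{~}\eqref{eq:390} and \eqref{eq:370} is a genuine real-valued smooth solution of Eq.~\eqref{eq:050} (with $A,B$ fixed by \eqref{eq:160}). Since the $\EEpm{\pm}$ solve Eq.~\eqref{eq:020}, such a $\varphi$ is automatically expressed in terms of solutions to that equation (cf.\ remark~\ref{r:080}). Hence the $\alpha$-family produced by the construction consists entirely of the sought representations, and the only thing left to establish is that it exhausts all solutions of \eqref{eq:050}. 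Because Eq.~\eqref{eq:050} is a first-order ODE with real-analytic right-hand side, the discussion in the proof of lemma~\ref{l:050} shows that a solution on $\Xi$ is uniquely determined by its initial datum $\varphi(0)$; therefore it suffices to show that every admissible value of $\varphi(0)$ is realised by the construction for a suitable $\alpha$.

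First I would record that the initial datum of the solution produced from a given $\alpha$ is constrained by Eq.~\eqref{eq:480}, which is exactly the $t=0$ instance of \Eqs{~}\eqref{eq:390}, \eqref{eq:370}. Reading \eqref{eq:480} as an equation for $\alpha$ with $\varphi(0)$ prescribed, it is a homogeneous linear relation in the pair $(\sin\tfrac12\alpha,\cos\tfrac12\alpha)$, so it admits a solution $\tfrac12\alpha$, and hence an $\alpha$, as soon as the coefficient vector $(\EEpm{-}(1)\sin(\tfrac12\varphi(0)-\tfrac{\pi}{4}),\,\EEpm{+}(1)\cos(\tfrac12\varphi(0)-\tfrac{\pi}{4}))$ is non-zero. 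The two trigonometric factors never vanish simultaneously because $\sin^2+\cos^2=1$, and by corollary~\ref{c:010} one has $\EEpm{+}(1)\neq0\neq\EEpm{-}(1)$; consequently the coefficient vector is non-zero for every real $\varphi(0)$, and a corresponding $\alpha$ always exists.

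Assembling the pieces: given an arbitrary solution $\varphi(t)$ of Eq.~\eqref{eq:050}, I would choose $\alpha$ solving \eqref{eq:480} with $\varphi(0)$ equal to the initial value of that solution; the solution manufactured from this $\alpha$ by Theorem~\ref{t:040} then shares the same initial datum and, by ODE uniqueness, coincides with $\varphi(t)$. The one point needing a word of care is that \Eqs{~}\eqref{eq:390} fix $\varphi$ only modulo $2\pi$ pointwise, the actual lift being selected by the standard continuous branch of the logarithm; since shifting a solution of \eqref{eq:050} by $2\pi$ again yields a solution and amounts merely to a different branch choice, this additive ambiguity is harmless and the exact initial value can always be matched. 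I expect this bookkeeping --- verifying the solvability of \eqref{eq:480} uniformly in $\varphi(0)$, i.e.\ excluding the simultaneous degeneration of its coefficients --- to be the only genuine and, as just seen, very mild obstacle; everything else is a direct appeal to Theorem~\ref{t:040}, corollary~\ref{c:010}, and uniqueness for the first-order equation \eqref{eq:050}.
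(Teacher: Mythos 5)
Your proposal is correct and takes essentially the same route as the paper: the paper likewise reads Eq.~\eqref{eq:480} as the $t=0$ instance of \Eqs{~}\eqref{eq:390}, \eqref{eq:370}, notes that it is solvable with respect to $\alpha$ for any given real $\varphi(0)$ (the values $\EEpm{\pm}(1)$ being real and non-zero), and then invokes Theorem~\ref{t:040} to conclude. Your explicit appeals to corollary~\ref{c:010} for the non-degeneracy of the coefficient pair and to uniqueness for the first-order ODE merely fill in steps the paper treats as obvious or leaves implicit.
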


\section*{\protect\centering{\sc Conclusion }}\label{s:080}

We have here shown
that any  solution to
the equation
\eqref{eq:010},
utilized
for the
modeling of dynamics of a Josephson junction,
can be converted to solutions
to Eq.{~}\eqref{eq:050} 
by means of a quadrature and analytic continuation of
two real analytic functions (theorem \ref{t:010}).
Moreover, in a generic case,
 a basis of the space of solutions to Eq.{~}\eqref{eq:050}
can then be
produced
and it is
constituted by the eigenfunctions 
of the operator $\opC $ (defined by Eq.{~}\eqref{eq:030});
moreover,
these are real (self-conjugated, see theorem \ref{t:030}).

Conversely, let  the two real eigenfunctions of the operator $\opC $
with eigenvalues $+1$ and $-1$
be given. Then all the solutions to Eq.{~}\eqref{eq:010}
can be obtained
making use of
the formulas 
\eqref{eq:390} and \eqref{eq:410} (theorem \ref{t:040},
corollary \ref{c:070}).
A similar formula, Eq.{~}\eqref{eq:400},
yields explicit representations of %
the integrals \eqref{eq:430} which are involved
in the criterion of the so called phase-lock \cite{T0,T2},
the remarkable property manifested under certain conditions by
solutions to Eq.{~}\eqref{eq:010} \cite{Ba,MK}. 

In total, the relationships
indicated above establish
the
explicit
1-to-1 correspondence between solutions spaces
of Eq.{~}\eqref{eq:050} and Eq.{~}\eqref{eq:010}, essentially,
because ambiguity still retained can be considered trivial.

It is also worth noting that the eigenfunctions
of the operator $\opC$ (as well as this operator on its own, of course)
are the important tools proving to be efficient in investigation of various problems
related to \sdcHe~\eqref{eq:050}. In particular,
the following explicit matrix representation $ \mathbf{M} $ of
{\em the monodromy
transformation}%
\footnote{
Here the monodromy transformation sends a solution to Eq.{~}\eqref{eq:050}
to another its solution which
is obtained from the former by means of
point-wise analytic continuation along 
counterclockwise oriented
full circle arcs
encircling
the singular center $z=0$.
On the set of solutions to Eq.{~}\eqref{eq:010},
the monodromy transformation
of solutions to Eq.{~}\eqref{eq:050}
is converted to the map $\varphi(t)\mapsto
{\mathcal M}\varphi(t)=
\varphi(t+2\pi/\omega)$.
Given $\mathbf{M}$, the
making use of  \Eqs~\eqref{eq:150},
\eqref{eq:390}, \eqref{eq:400}, etc enables one to obtain
an explicit representation of this transformation.
}
 of its space of solution
with respect to the basis $  \{\EEpm{+},\EEpm{-}\}$
can be obtained\footnote{
The formula \eqref{eq:510} had been derived
in case of integer orders $\llll$.
The cases of other   $\llll$
require additional examination.
}:
\begin{eqnarray}
\label{eq:510}
&&\hspace{2em}
\mathbf{M}=
 e^{4\mu}\big(2\EEpm{+}(\coveredone)\EEpm{-}(\coveredone) \big)^{-1}\times
\\&&
\hspace{-2.5em}
\begin{pmatrix}
\EEpm{+}(\coveredmcone)\EEpm{-}(\coveredmcone)
+
\EEpm{+}(\coveredmpone)\EEpm{-}(\coveredmpone)
&
\hspace{-3em}
\EEpm{+}(\coveredmcone)^2-\EEpm{+}(\coveredmpone)^2
\\
\EEpm{-}(\coveredmcone)^2-\EEpm{-}(\coveredmpone)^2
&
 \hspace{-3em}
\EEpm{+}(\coveredmcone)\EEpm{-}(\coveredmcone)
+
\EEpm{+}(\coveredmpone)\EEpm{-}(\coveredmpone)
\end{pmatrix}.
\nonumber
\end{eqnarray}
Here the symbols $\coveredmcone  $
and
$\coveredmpone  $ denote
the preimages of $-1\in\mathbb{C}^*$
in
the Riemann surface
$\tilde{\mathbb{C}}\mathstrut^*$,
the domain %
of generic
solutions to Eq.{~}\eqref{eq:050}, ``branching'' over $\mathbb{C}^*$
around ``the axis'' passing through the removed zero
(see the remark \ref{r:020}).
More exactly,
these
preimages of $-1$
are selected  as the closest ones
to the preimage of $1$, the fixed point
 of lifting of the transformation \eqref{eq:080}, which we may denote just as $1$.
Of them,
$\coveredmcone  $ is
reached from
$1$
along
an arc passed in the counterclockwise direction while for
 $\coveredmpone  $ similar arc is directed clockwise.
The above formula shows, in particular,
 that the diagonal elements of $ \mathbf{M} $
are real and coincide while the off-diagonal ones are pure imaginary.
It also follows from
 the equation
\eqref{eq:100}
that $\det \mathbf M=1$.
The two eigenvalues of the matrix \eqref{eq:510} coincide if and only if
one of its off-diagonal elements vanishes
and this observation
can be utilized as the base of
yet another
 criterion
of the phase-lock behavior
 for solutions to Eq.{~}\eqref{eq:010},
this time
referring to
properties
of  eigenfunctions
of the operator $\opC$.

\end{document}